\def\qed {{\unskip\nobreak\hfil\penalty25 \hskip2em\hbox{}\nobreak $\Box$ \parfillskip=0pt \finalhyphendemerits=0 \par\bigskip}}
\newcommand{\ef}{\mathbb{F}}
\newcommand{\tr}{\textup{Tr}}
\newcommand{\zZ}{\mathbb{Z}}
\newcommand{\cC}{\mathbb{C}}
\newcommand{\D}{\Delta}
\newcommand{\ga}{\gamma}
\newtheorem{thm}{Theorem}
\newtheorem{lem}[thm]{Lemma}
\begin{document}

\title{A characterization of two-weight projective cyclic codes}

\author{Tao Feng}


\date{\today}
\maketitle

\begin{abstract} We give necessary  conditions for a two-weight projective cyclic code to be the direct sum of two one-weight irreducible cyclic subcodes of the same dimension, following the work of Wolfmann and Vega. This confirms Vega's conjecture that all the two-weight cyclic codes of this type are the known ones in the projective case.
\end{abstract}

\begin{keywords}cyclic code, two-weight code, projective code, Gauss sum

\end{keywords}

\section{Introduction}
A projective code is a linear code such that the minimum weight of its dual code is at least $3$. A cyclic code is irreducible if its check polynomial is irreducible. Cyclic codes have important applications in digit communication, so it is of theoretical interest to determine the weight distribution of cyclic codes. We refer the reader to the classical textbooks \cite{ms} and \cite{vanLint} for basic facts on cyclic codes. When a cyclic code has exactly one nonzero weight, a nice characterization is given by Vega in \cite{vegaOne}, which we will describe later.

 The class of two-weight cyclic codes was extensively studied in \cite{c1,c5,c2,c3,wolfm,c6}. Schmidt and White obtained in \cite{sw} the necessary and sufficient conditions for an irreducible cyclic code to have at most two weights, and they also explored the connections with other combinatorial objects.

We fix the following notations throughout this note:
\begin{quote}
 {\it $p$ is a prime and $q$ is a power of $p$. Let $k\geq 1$ be a positive integer and $r=q^k$. Write $\D=\frac{r-1}{q-1}$. Let $\ga$ be a fixed primitive element of $\ef_r$. For an integer $a$, we use $h_a(x)$ for the minimal polynomial of $\ga^a$ over $\ef_q$.
 Let $\xi_{N}$ be the complex primitive $N$-th root of unity $e^{\frac{2\pi i}{N}}$ for any positive integer $N$.}
\end{quote}

 It was once conjectured that two-weight projective cyclic codes must be irreducible. Wolfmann \cite{wolfm} proved that this is true when $q=2$, but false when $q>2$. To be more specific, he proved that if $C$ is an [$n,k$] two-weight projective cyclic code over $\ef_q$ with $\gcd(n,q)=1$, then either

(1)  $C$ is irreducible, or

(2) $q>2$, $C$ is the direct sum of two one-weight irreducible cyclic subcodes of the same dimension, and $n=\lambda\frac{q^k-1}{q-1}$, where $\lambda|q-1$, $\lambda\ne 1$. Additionally, the two nonzero weights are $\lambda q^{k-1}$ and $(\lambda-1)q^{k-1}$.

After Wolfmann's work, several infinite families of two-weight cyclic codes of the second type were discovered, and a unified explanation is given by Vega in \cite{vega}. We state his result below.

\begin{thm}\cite[Theorem 10]{vega}\label{vgt} Let $a_1,a_2,v$ be integers such that  $a_1q^i\not\equiv a_2\pmod{q^k-1}$ for all $i\ge 0$, $v=\gcd(a_1-a_2,q-1)$, $a_2\in\zZ_{\D}^*$, and let $\tilde{a}_2$ be the inverse of $a_2$ in $\zZ_\D^*$. For an integer $\ell$ which divides $\gcd(a_1,a_2,q-1)$, we set $\lambda=\frac{(q-1)l}{\gcd(a_1,a_2,q-1)}$, $n=\lambda\D$ and $\mu=\frac{q-1}{\lambda}$. Suppose that at least one of the following two conditions holds:

(1) $p=2$, $k=2$, $v=1$, and $a_1$ is a unit in the ring $\zZ_\D$, or

(2) for some integer $j$, $1+\tilde{a}_2(a_1-a_2)\equiv p^j\pmod{v\D}$.

Then the following four assertions are true:

a) $h_{a_1}(x)$, $h_{a_2}(x)$ are the check polynomials for two different one-weight cyclic codes of length $n$ and dimension $k$;

b) $\mu|v$, and $\lambda>v/\mu$;

c) If $C$ is the cyclic codes with check polynomial $h_{a_1}(x)h_{a_2}(x)$, then $C$ is an $[n,2k]$ two-weight cyclic code with nonzero weights $\lambda q^{k-1}$ and $(\lambda-v/\mu)q^{k-1}$;

d) $C$ is a projective code if and only if $v=\mu$.
\end{thm}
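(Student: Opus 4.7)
The plan is to parametrize codewords of $C$ by pairs $(\alpha,\beta)\in\ef_r^2$ via the trace,
$$c(\alpha,\beta)=\bigl(\tr_{\ef_r/\ef_q}(\alpha\ga^{a_1 i}+\beta\ga^{a_2 i})\bigr)_{i=0}^{n-1},$$
and to reduce computing $w(c(\alpha,\beta))$ to evaluating an exponential sum. By orthogonality of additive characters of $\ef_q$, one obtains
$$w(c(\alpha,\beta))=\frac{(q-1)n}{q}-\frac{1}{q}\sum_{t\in\ef_q^*}T(t\alpha,t\beta),$$
where $T(\alpha,\beta)=\sum_{i=0}^{n-1}\psi(\alpha\ga^{a_1 i}+\beta\ga^{a_2 i})$ and $\psi$ is the canonical additive character of $\ef_r$. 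So the bulk of the work is to show that $T(\alpha,\beta)$ takes exactly two values as $(\alpha,\beta)$ ranges over $\ef_r^2\setminus\{(0,0)\}$; the outer sum over $t$ then preserves this dichotomy.

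For assertion (a), I would cite Vega's earlier one-weight characterization: the definitions $\lambda=(q-1)\ell/\gcd(a_1,a_2,q-1)$ and $n=\lambda\D$ with $\ell\mid\gcd(a_1,a_2,q-1)$ are precisely his criterion for the irreducible cyclic code with check polynomial $h_{a_j}(x)$ to have length $n$ and a unique nonzero weight. The dimensions are both $k$ since $a_2\in\zZ_\D^*$ and the $q$-cyclotomic coset hypothesis $a_1q^i\not\equiv a_2\pmod{q^k-1}$ forces both cosets to have size $k$; the same hypothesis gives $h_{a_1}\ne h_{a_2}$. Parts (b) and (d) are then elementary divisibility manipulations from the definitions of $\lambda,\mu,v$, together with the observation that a codeword in the dual of $C$ of weight at most $2$ corresponds to a nontrivial $\ef_q$-linear relation $s\ga^{a_1 i}+t\ga^{a_1 j}=0$ holding simultaneously at $a_1$ and $a_2$ exponents; a short calculation shows such a relation exists iff $v>\mu$.

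The heart of the proof is the two-value statement for $T(\alpha,\beta)$. Since $a_2\in\zZ_\D^*$, an appropriate change of variable $j=a_2 i$ (made precise by tracking the relevant cyclic subgroup of $\ef_r^*$) transforms $T$ into
$$T(\alpha,\beta)=\sum_{j}\psi\bigl(\alpha\ga^{[1+\tilde a_2(a_1-a_2)]j}+\beta\ga^{j}\bigr).$$
The exponent $1+\tilde a_2(a_1-a_2)$ is the decisive quantity, and hypothesis (2) says it equals a $p$-power $p^j$ modulo $v\D$. Because the Frobenius $x\mapsto x^p$ permutes $\ef_r$ and leaves $\psi$ invariant, raising to this power on the relevant subgroup of $\ef_r^*$ coincides with Frobenius conjugation; substituting $y=\ga^j$ and applying an inverse Frobenius to $\alpha$ then collapses both monomials to linear terms, reducing $T$ to $\sum_{y\in H}\psi((\alpha^{p^{-j}}+\beta)y)$ over a subgroup $H\subseteq\ef_r^*$. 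This sum is $|H|$ or $0$ depending on whether $\alpha^{p^{-j}}+\beta$ lies in the annihilator of $H$. Reinserting into the weight formula yields the two claimed values $\lambda q^{k-1}$ and $(\lambda-v/\mu)q^{k-1}$. Hypothesis (1) is a small-characteristic anomaly treated by direct computation in $\ef_{16}$.

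The main obstacle will be the change-of-variable and Frobenius step: one must identify the precise subgroup $H$ of $\ef_r^*$ on which the sum $T$ is supported (determined by $n=\lambda\D$ and the $\gcd$s of $a_1,a_2$ with $q^k-1$), and verify that the $p^j$-congruence modulo $v\D$ really does realize a Frobenius action on $H$ rather than on some larger or smaller group. Once this arithmetic bookkeeping is in place, the remaining verifications—counting codewords in each weight class for (c), and checking the dual-distance criterion $v=\mu$ for (d)—fall out by matching coefficients with the general weight formula.
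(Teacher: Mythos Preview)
This theorem is not proved in the present paper: it is quoted verbatim from Vega's paper \cite{vega} (as the citation \cite[Theorem 10]{vega} in the theorem header indicates) and is used here only as background for the converse direction established in Theorem~\ref{main}. There is therefore no proof in this paper to compare your proposal against.

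That said, your outline is broadly the right shape for how Vega's argument goes---trace parametrization, the weight formula via additive-character orthogonality, and the substitution exploiting $1+\tilde a_2(a_1-a_2)\equiv p^j\pmod{v\D}$ together with Frobenius invariance of $\psi$. One point in your sketch is too optimistic as written: after collapsing the two monomials you assert that $\sum_{y\in H}\psi(cy)$ takes only the values $|H|$ and $0$ according to whether $c$ lies in ``the annihilator of $H$''. For a multiplicative subgroup $H\le\ef_r^*$ this inner sum is a Gauss period, not a character sum over an additive subgroup, and in general it does not take only two values. The dichotomy in Vega's theorem comes rather from the full sum $\sum_{t\in\ef_q^*}T(t\alpha,t\beta)$: after the Frobenius substitution one is summing $\psi$ over a union of $\ef_q^*$-cosets, and the two-value behavior arises from counting how many of those cosets have trace-zero representative (equivalently, from whether $\alpha^{p^{-j}}+\beta$ lands in a prescribed $\ef_q^*$-orbit). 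You will need to redo that step with the correct coset structure; the arithmetic of $v$, $\mu$, and $\lambda$ enters precisely there, and is what produces the offset $v/\mu$ between the two weights.
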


\vspace{8pt}

It is not hard to show that under condition (1) in the above theorem, the code $C$ can not be projective. It is the purpose of this note to prove the following characterization of two-weight projective cyclic codes.

\begin{thm}\label{main}Let $C$ be an $[n,k]$  two-weight projective cyclic code over $\ef_q$ with $\gcd(n,q)=1$. Then $C$ is either irreducible, or the direct sum of two one-weight irreducible cyclic subcodes of the same dimension. Let $\ga$ be a fixed primitive element of $\ef_{q^k}$. In the latter case, $q>2$, and there exist integers $a_1,a_2$ such that

(1) $a_1\not\equiv a_2q^j\pmod{q^k-1}$ for any integer $j$;

(2) $a_1,a_2\in \zZ_{\D}^*$;

(3) $\gcd(a_1,a_2,q-1)=\gcd(a_1-a_2,q-1)=\frac{r-1}{n}$;

(4) $1+\tilde{a}_2(a_1-a_2)\equiv p^j\pmod{v\D}$  for some integer $j$, where $\tilde{a}_2$ is the inverse of $a_2$ in $\zZ_\D^*$.;

(5) the minimal polynomials of $\ga^{a_1},\ga^{a_2}$ over $\ef_q$, denoted by $h_{a_1}(x)$ and $h_{a_2}(x)$, have the same degree.

 Moreover, the product $h_{a_1}(x)h_{a_2}(x)$ is the check polynomial of the cyclic code $C$.
\end{thm}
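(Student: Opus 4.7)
The plan is to combine Wolfmann's classification, Vega's one-weight characterization in \cite{vegaOne}, and a Gauss sum analysis to reverse-engineer conditions (1)--(5). The irreducible case is immediate, so I would focus on the reducible case. There, Wolfmann's theorem supplies $q > 2$, a decomposition $C = C_1 \oplus C_2$ in which each $C_i$ is a one-weight irreducible cyclic subcode of common dimension $k$ (so $C$ itself has dimension $2k$), length $n = \lambda \Delta$ with $\lambda \mid q-1$ and $\lambda \ne 1$, and nonzero weights $\lambda q^{k-1}$ and $(\lambda - 1) q^{k-1}$. Writing the check polynomial as $h_{a_1}(x) h_{a_2}(x)$ gives condition (1), since $C_1 \ne C_2$ means $\ga^{a_1}$ and $\ga^{a_2}$ lie in distinct Frobenius orbits modulo $q^k - 1$, and condition (5), since $h_{a_1}$ and $h_{a_2}$ share the degree $k$. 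Applying the one-weight characterization of \cite{vegaOne} to each $C_i$ yields $a_i \in \zZ_\Delta^*$, which is condition (2), together with $\gcd(a_i, q-1) = (q-1)/\lambda$, from which $\gcd(a_1, a_2, q-1) = (q-1)/\lambda = (r-1)/n$, the first half of condition (3).

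The heart of the argument is a Gauss sum computation that extracts condition (4) and the second half of condition (3). For a codeword $c_{\beta_1, \beta_2} \in C$ indexed by $(\beta_1, \beta_2) \in \ef_r^2$, the Hamming weight admits the standard expression $n - q^{-1} \sum_{i=0}^{n-1} \sum_{t \in \ef_q} \psi \bigl( t \cdot \tr(\beta_1 \ga^{a_1 i} + \beta_2 \ga^{a_2 i}) \bigr)$ for a fixed nontrivial additive character $\psi$ of $\ef_q$. A Fourier decomposition on the multiplicative group $\ef_r^*$ rewrites this as a linear combination of Gauss sums supported on the multiplicative characters of $\ef_r^*$ whose orders divide $v \Delta$, where $v := \gcd(a_1 - a_2, q-1)$. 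The one-weight property of each $C_i$ prescribes the contributions from the one-parameter sub-families $\{\beta_1 = 0\}$ and $\{\beta_2 = 0\}$, leaving a residual family of Gauss sums that must account for the second weight $(\lambda - 1) q^{k-1}$. The two-weight hypothesis forces these Gauss sums to be rational powers of $p$ of prescribed absolute value, and a semi-primitive-type Gauss sum evaluation then translates this into the arithmetic congruence $1 + \tilde{a}_2(a_1 - a_2) \equiv p^j \pmod{v \Delta}$, yielding condition (4). The projectivity of $C$ corresponds to Vega's equality $v = \mu = (q-1)/\lambda$ in Theorem \ref{vgt}(d), and this equality emerges from the same matching, supplying the second half of condition (3).

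The main obstacle lies in the Gauss sum step. Singling out the correct family of multiplicative characters of $\ef_r^*$ (those of order dividing $v \Delta$), cleanly separating the contribution of each one-weight subcode from the cross terms generated by the direct sum, and invoking the Gauss sum evaluation in a form from which the precise congruence (4) can be read off are each delicate. Careful bookkeeping of $\lambda$, $v$, $\mu$, and the relevant gcds will be essential, as will ensuring that the Fourier decomposition is taken over precisely the subgroup of $\ef_r^*$ of index dividing $n$.
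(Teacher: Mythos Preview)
Your outline correctly identifies the easy parts: Wolfmann's theorem plus Lemma~\ref{vega1} give (1), (2), (5) and the structure of the check polynomial. But the remaining two conditions do not follow the way you sketch, and the proposal has two genuine gaps.

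\medskip

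\textbf{Condition (3).} Lemma~\ref{vega1} only says $\gcd(a_i,\Delta)=1$; it says nothing about $\gcd(a_i,q-1)$, so your claim that ``$\gcd(a_i,q-1)=(q-1)/\lambda$'' comes from the one-weight characterization is unjustified. In the paper, condition (3) is obtained \emph{directly from projectivity}: one checks that if two coordinates of $C$ were proportional then $i-j$ would be a multiple of $(r-1)/v$ with $v=\gcd(a_1-a_2,q-1)$, so projectivity forces $(r-1)/v\ge n$; combined with the obvious divisibility $\frac{r-1}{\gcd(a_1,a_2,q-1)}\mid n$ and $v\ge\gcd(a_1,a_2,q-1)$, this squeezes out $v=\gcd(a_1,a_2,q-1)=(r-1)/n$. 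Invoking Theorem~\ref{vgt}(d) here would be circular, since that theorem runs in the opposite direction.

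\medskip

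\textbf{Condition (4).} This is where your plan really diverges from what is needed. A ``semi-primitive-type Gauss sum evaluation'' will not produce the congruence $w\equiv p^j\pmod{v\Delta}$: there is no reason for the relevant characters to be in the semi-primitive case, and the Gauss sums here are never evaluated explicitly. The paper's route is substantially more elaborate:
\begin{enumerate}
\item The character-sum computation shows only that each product $G(\varphi^{\frac{q-1}{v}s})G(\varphi^{-\frac{q-1}{v}ws})$ is divisible by $r$ and has modulus $r$.
\item Galois invariance plus Kronecker's theorem on algebraic integers all of whose conjugates have modulus $1$ then gives that $G(\varphi^{\frac{q-1}{v}ws})/G(\varphi^{\frac{q-1}{v}s})$ is a root of unity.
\item Specializing to characters trivial on $\ef_q^*$ and passing to the Singer difference set $D\subset\ef_r^*/\ef_q^*$, one shows $D^{(w)}=Dg$ for some $g$, i.e.\ $w$ is a multiplier of $D$. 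Since the only multipliers of the classical Singer difference set are powers of $p$, this yields $w\equiv p^j\pmod{\Delta}$.
\item Upgrading from $\pmod{\Delta}$ to $\pmod{v\Delta}$ is a separate, nontrivial step: via Stickelberger's theorem the root-of-unity relation becomes an equality of $p$-adic digit sums, and a careful combinatorial analysis of the base-$p$ expansion of $(q-1)/v$ forces $v\mid p^j-1$, whence $w\equiv p^j\pmod{v\Delta}$ by CRT.
\end{enumerate}
None of the ingredients in steps (2)--(4) --- Kronecker's theorem, multiplier theory for Singer difference sets, Stickelberger, and the digit-sum argument --- appear in your proposal, and there is no evident shortcut around them. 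The phrase ``the two-weight hypothesis forces these Gauss sums to be rational powers of $p$'' is not correct in general; only the product is controlled, and extracting the congruence on $w$ requires the machinery above.
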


\vspace{8pt}
We shall need the following characterization of one-weight cyclic code due to Vega \cite{vegaOne}. This will serve as the ingredient for the two-weight projective cyclic codes of the second type as described by Wolfmann.
\begin{lem}\label{vega1}
With the same notations as specified before, we have that $\gcd(a,\D)=1$ if and only if $h_a(x)$ is the check polynomial for a one-weight irreducible cyclic code.
\end{lem}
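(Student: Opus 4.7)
The plan is to analyze the weight distribution of the irreducible cyclic code with check polynomial $h_a(x)$ via the trace representation and Gauss sums. Set $N := \gcd(a, q^k-1)$, so that $\ga^a$ has order $n = (q^k-1)/N$, and let $H := \langle \ga^a\rangle$ be the cyclic subgroup of $\ef_{q^k}^*$ of order $n$. The code $C$ has length $n$ and consists of the codewords $c(\beta) = (\tr_{q^k/q}(\beta\ga^{aj}))_{j=0}^{n-1}$ for $\beta \in \ef_{q^k}$. The Hamming weight $w(c(\beta))$ depends only on the coset $\beta \cdot (\ef_q^* H)$ in $\ef_{q^k}^*$: multiplying $\beta$ by $h \in H$ permutes the coordinates of $c(\beta)$, while multiplying by $u \in \ef_q^*$ rescales them by $u$, and both operations preserve weight. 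A direct computation in the cyclic group $\ef_{q^k}^* = \langle\ga\rangle$ gives $\ef_q^* H = \langle\ga^{\gcd(a,\D)}\rangle$, so $|\ef_{q^k}^*/(\ef_q^* H)| = \gcd(a,\D)$. Hence if $\gcd(a,\D) = 1$, then $\ef_q^* H = \ef_{q^k}^*$ and $w(c(\beta))$ is constant on $\ef_{q^k}^*$; a short separate check shows that in this case the $q$-cyclotomic coset of $a$ modulo $q^k-1$ has size $k$, so $\deg h_a(x) = k$ and $\beta \mapsto c(\beta)$ is injective. This proves the ``if'' direction.

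For the converse, Fourier inversion on the subgroup $H$ yields, for $\gamma \in \ef_{q^k}^*$,
\[
\sum_{h \in H} \chi(\gamma h) = \frac{1}{N}\sum_{\psi \in H^\perp}\overline{\psi(\gamma)}\,G(\psi),
\]
where $\chi$ is the canonical additive character of $\ef_{q^k}$, $H^\perp$ is the group of multiplicative characters of $\ef_{q^k}^*$ trivial on $H$, and $G(\psi) = \sum_{y \in \ef_{q^k}^*}\psi(y)\chi(y)$ is the Gauss sum. Starting from the identity $w(c(\beta)) = \tfrac{n(q-1)}{q} - \tfrac{1}{q}\sum_{x \in \ef_q^*}\sum_{h \in H}\chi(x\beta h)$ and using $\sum_{x \in \ef_q^*}\overline{\psi(x)} = (q-1)\cdot 1_{\psi|_{\ef_q^*}=1}$, one obtains
\[
w(c(\beta)) = \tfrac{n(q-1)}{q} - \tfrac{q-1}{Nq}\sum_{\psi \in (\ef_q^* H)^\perp}\overline{\psi(\beta)}\,G(\psi).
\]
The maps $\beta \mapsto \overline{\psi(\beta)}$ for $\psi \in (\ef_q^* H)^\perp$ are distinct characters of the quotient $\ef_{q^k}^*/(\ef_q^* H)$, hence linearly independent as functions of $\beta$. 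Since $|G(\psi)| = q^{k/2} \neq 0$ for every non-trivial $\psi$, the above expression is constant in $\beta \in \ef_{q^k}^*$ if and only if $(\ef_q^* H)^\perp$ consists of the trivial character alone, i.e.\ $\gcd(a,\D) = 1$.

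The main obstacle is the converse direction: translating the one-weight property into the sharp arithmetic condition on $a$. The critical ingredient is that ``one-weight'' is read as ``$w(c(\beta))$ is constant on $\ef_{q^k}^*$,'' which is legitimate once the parameterization $\beta\mapsto c(\beta)$ is injective, i.e.\ once $\deg h_a(x) = k$; this is precisely the regime relevant to the main theorem. Fourier inversion on $H$, orthogonality over $\ef_q^*$, and the non-vanishing $|G(\psi)| = q^{k/2}$ of Gauss sums of non-trivial characters then force $|(\ef_q^* H)^\perp| = 1$, and the gcd identity $|\ef_q^* H| = (q^k-1)/\gcd(a,\D)$ from the first paragraph translates this into the desired $\gcd(a,\D)=1$.
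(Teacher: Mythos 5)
The paper does not prove this lemma at all --- it is quoted from Vega's WAIFI 2007 paper \cite{vegaOne} as a known result --- so there is no internal proof to compare against. Your self-contained argument is essentially the standard character-sum proof (and close in spirit to Vega's own): the computations are correct. The identity $\ef_q^* H=\langle\ga^{\gcd(a,\D)}\rangle$, the resulting index $\gcd(a,\D)$, the Fourier inversion over $H^\perp$, the orthogonality over $\ef_q^*$ reducing the sum to $(\ef_q^*H)^\perp$, and the use of linear independence of characters together with $|G(\psi)|=\sqrt{r}\neq 0$ to force $(\ef_q^*H)^\perp$ to be trivial are all sound. The ``if'' direction's side claim that $\gcd(a,\D)=1$ implies $\deg h_a=k$ also checks out: $\gcd(a,\D)=1$ forces $\gcd(a,r-1)\mid q-1$, so $\D$ divides $(r-1)/\gcd(a,r-1)$, and the multiplicative order of $q$ modulo $\D$ is $k$.

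One point deserves emphasis beyond the parenthetical remark you make at the end. The restriction to $\deg h_a(x)=k$ is not merely ``the regime relevant to the main theorem'' --- it is necessary for the lemma to be true at all. If $k>1$ and $a=\D$, then $h_a(x)=x-\ga^{\D}$ is the check polynomial of the one-weight $[q-1,1]$ irreducible cyclic code over $\ef_q$, yet $\gcd(a,\D)=\D\neq 1$; more generally, whenever $\ga^a$ generates a proper subfield one can have one-weight codes violating the stated equivalence. Vega's original theorem fixes both the length and the dimension $k$, and the paper's application (two subcodes of dimension $k$ inside $\ef_{q^k}$) implicitly carries this hypothesis. So your proof is correct provided you state up front that ``irreducible cyclic code'' here means one of dimension $k$, i.e.\ $\ef_q(\ga^a)=\ef_{q^k}$; without that, the ``only if'' direction genuinely fails, and your Fourier argument correctly detects only that $w(c(\beta))$ is non-constant on $\ef_{q^k}^*$ (which is automatic when the parameterization has a kernel) rather than that the code has two nonzero weights.
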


\section{Preliminaries}
In this section, we introduce the necessary backgrounds that we shall use in this note.

 \subsection{Group rings, characters}
 Let $G$ be a (multiplicatively written) finite abelian group with identity $1_G$. The group ring $\cC[G]$ is a ring, consisting of all the formal sums of the free basis $\{g|g\in G\}$ over $\cC$, with the multiplication extending that of $G$ by linearity and distributivity. We identify a subset $D$ of $G$ with the corresponding group ring element, namely $D=\sum_{g\in D}g\in\cC[G]$. For a group ring element $A=\sum_{g\in G}a_gg\in\cC[G]$, and an integer $m$, we use the notation $A^{(m)}:=\sum_{g\in G}a_gg^m$.
 Please refer to \cite{igr} for more details.

 We use $\widehat{G}$ for the character group of $G$, consisting of all the homomorphisms from $G$ to $\cC^\ast$. We use $\chi_0$ for the principal character of $G$, namely the homomorphism which maps each of $G$ to $1$. We have the following orthogonal relation:
 \[\sum_{\chi\in \widehat{G}}\chi(g)=\begin{cases}|G|,\;&\text{ if }g=1_G,\\0,\;&\text{ otherwise.}\end{cases}\]
The {\it inversion formula} follows from this relation:
{\it If $A=\sum_{g\in G}a_gg\in\cC[G]$, then }
\[
a_g=\frac{1}{|G|}\sum_{\chi\in\widehat{G}}\chi(A)\chi^{-1}(g),
\]
where $\chi(A)=\sum_{a\in G}a_g\chi(g)$. A direct consequence is that two elements $A,B\in\cC[G]$ are equal if and only if $\chi(A)=\chi(B)$ for each character $\chi$.\\

Let $\tr$ be the trace function from $\ef_{r}$ to $\ef _q$, and use $\tr_{\ef_q/\ef_p}$ for the trace function from $\ef_q$ to $\ef_p$. Let $\psi$ be the canonical character of $\ef_q$ defined by $\psi(x)=e^{\frac{2\pi i}{p}\tr_{\ef_q/\ef_p}(x)}$, $x\in\ef_q$. For each $a\in \ef_r$, we can define an additive character $ \psi_a$ of $\ef_r$ by $\psi_a(x)=\psi(\tr(ax))$ ($x\in\ef_r$). Let  $\varphi$ be the multiplicative character of $\ef_{r}^*$ which maps $\ga$ to $\xi_{r-1}$.

Each character of $\ef_r\times\ef_r$ is of the form $\psi_a\otimes\psi_b$ ($a,b\in\ef_r$) which is defined as\[\psi_a\otimes\psi_b((x,y))=\psi_a(x)\psi_b(y),\;\forall (x,y)\in\ef_r\times\ef_r. \]

Each character of the quotient group $\ef_r^*/\ef_q^*$ can be viewed as a multiplicative character of $\ef_r^*$ which is principal on $\ef_q^*$, and vice versa. Please refer to \cite{ff} for details.

\subsection{Gauss sums}
Let $\chi$ be a multiplicative character of $\ef_r^*$. The Gauss sum $G(\chi)$ is defined as \[G(\chi)=\sum_{x\in\ef_r^*}\psi(\tr(x))\chi(x).\]
Clearly, $G(\chi)=-1$ if $\chi=\chi_0$, and $|G(\chi)|=\sqrt{r}$ otherwise.  One important property of Gauss sums we shall use is the following
\[\psi(\tr(x))=\frac{1}{r-1}\sum_{\chi\in\widehat{\ef_r^*}}G(\chi)\chi^{-1}(x),\forall x\in\ef_r^*.\]
Please refer to \cite{ff} and \cite{gjs} for details on Gauss sums.

\subsection{Singer difference sets, multipliers}
Let $G$ be the quotient group $\ef_r^*/\ef_q^*$, and $D$ be the image of $\{x\in F_r^*|\tr(x)=1\}$ in $G$. It is well known that $D$ is the classical Singer difference set, which satisfies that
 \[
 DD^{(-1)}=q^{k-2}+q^{k-2}(q-1)G.
 \]
 For a nontrivial character $\chi$ of $G$, we have (\cite{yama})
 \[
 \chi(D)=-\frac{G(\chi)}{q}.
 \]
 In particular, $D$ is an invertible element in the ring $\cC[G]$. Under the character $\chi$, its inverse $D^{-1}$ has character value $\chi(D^{-1})=\frac{1}{\chi(D)}$, so $D^{-1}$ can be computed using the inversion formula. Similarly, we can show that $D^{(-1)}$ is invertible, since $\chi(D^{(-1)})=\overline{\chi(D)}$.

 If $t$ is an integer relatively prime with $|G|$, then $t$ is called a multiplier of $D$ if there exists $g\in G$ such that $D^{(t)}=Dg$. By \cite[Proposition 3.1.1]{Pott}, the only multipliers of the classical Singer difference sets are the powers of $p$.

 Please refer to \cite{Pott} or \cite{BJL} for more details on difference sets.

\subsection{Stickelberger's theorem}
Let $a$ be any integer not divisible by $r-1$. We use $L(a)$ for the least positive integer such that $L(a)\equiv a\pmod{r-1}$. Let $L(a)=a_0+a_1p+\cdots+a_{m-1}p^{m-1}$ ($r=p^m$) be the $p$-adic expansion of $L(a)$, with $0\leq a_i\leq p-1$ for each $i$. We define the function
\[
s(a)=\sum_{i=0}^{m-1}a_i.
\]
Let $\wp$ be the prime ideal in $\zZ[\xi_{r-1}]$ such that
\[
\xi_{r-1}\pmod{\wp}=\ga,
\]
and $\mathfrak{P}$ be the prime ideal in $\zZ[\xi_{r-1},\xi_p]$ lying over $\wp$. Stickelberger's theorem tells us that the highest power of $\mathfrak{P}$ that divides $G(\varphi^{-a})$ is $s(a)$, for any integer $a$ not divisible by $r-1$. Please refer to \cite{irc}.

\section{Proof of the main result}
\noindent{\bf Proof of Theorem \ref{main}:} Let $C$ be an $[n,k]$  two-weight projective cyclic code, and assume that $C$ is not  irreducible. Then by Wolfmann's result \cite{wolfm}, $C$ is  the direct sum of two one-weight irreducible cyclic subcodes of the same dimension; moreover, $q>2$, $n=\lambda\frac{r-1}{q-1}$, where $\lambda|q-1$ , $\lambda> 1$, and the two nonzero weights are $\lambda q^{k-1}$ and $(\lambda-1)q^{k-1}$. By Lemma \ref{vega1}, there exist $a_1,a_2\in \zZ_\D^*$ such that $h_{a_1}(x),h_{a_2}(x)$ have the same degree and their product $h_{a_1}(x)h_{a_2}(x)$ is the check polynomial of $C$. Since $x^n-1$ has no repeated root, $a_1\not\equiv a_2q^j\pmod{r-1}$ for any integer $j$. Therefore, $a_1,a_2$ satisfy the conditions (1), (2), (5), and we only need to prove that  $a_1,a_2$ satisfy the conditions (3) and (4).

Write $\beta:=\ga^{-1}$. In the trace form, the cyclic code $C$ consists of codewords of the form
\[c_{a,b}:=(\tr(a+b),\tr(a\beta^{a_1}+b\beta^{a_2}),\cdots,\tr(a\beta^{(n-1)a_1}+b\beta^{(n-1)a_2}))\]
with $a,b\in \ef_r$ (see \cite{vanLint,vega,wolfm}). We use $wt(c_{a,b})$ for the Hamming weight of the codeword $c_{a,b}$.

From $\ga^{a_1n}=\ga^{a_2n}=1$, we see that the length $n$ is divisible by $\frac{r-1}{\gcd(a_1,a_2,q-1)}$, i.e.,
 \[
 \ell:=\frac{n\cdot\gcd(a_1,a_2,q-1)}{r-1}=\frac{\lambda\cdot\gcd(a_1,a_2,q-1)}{q-1}
 \]
 is an integer.  Write $\tilde{a}_2$ for the inverse of $a_2$ in $\zZ_\D^*$. Also, we define
 \[
 v:=\gcd(a_1-a_2,q-1),\;w:=1+\tilde{a}_2(a_1-a_2).
 \]

\begin{lem}\label{lem_1} We have $\ell=1$,  $v=\gcd(a_1,a_2,q-1)=\frac{q-1}{n}$, and $\gcd(v,\D)=1$.
\end{lem}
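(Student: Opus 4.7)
The lemma is arithmetic in nature: I would pinch $d:=\gcd(a_1,a_2,q-1)$ and $v:=\gcd(a_1-a_2,q-1)$ between a divisibility lower bound (forced by $n$ being a common period of $\beta^{a_1}$ and $\beta^{a_2}$) and an upper bound (forced by projectivity), then read off all three assertions in one line.

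The lower bound comes from $\ga^{a_in}=1$ for $i=1,2$: combined with $n=\lambda\D$ and $\D(q-1)=r-1$, this forces $(q-1)\mid\lambda a_i$ and hence $(q-1)/\lambda\mid a_i$ (using $\lambda\mid q-1$). Consequently $(q-1)/\lambda\mid d\mid v\mid q-1$, so in particular $\ell\ge 1$. For the upper bound I would use the trace description of $C$ recalled just before the lemma: the parity-check matrix has columns $(\beta^{ia_1},\beta^{ia_2})^{T}\in\ef_r^2$ for $i=0,\ldots,n-1$, viewed as vectors over $\ef_q$. Projectivity forbids any two distinct columns from being $\ef_q$-proportional, i.e.\ no $m\in\{1,\ldots,n-1\}$ and $t\in\ef_q^*$ satisfy $\beta^{ma_1}=\beta^{ma_2}=t$. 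The condition $\beta^{ma_1}\in\ef_q^*$ combined with $\gcd(a_1,\D)=1$ forces $\D\mid m$; writing $m=k\D$ with $1\le k<\lambda$, the equation $\beta^{k\D(a_1-a_2)}=1$ reduces (via $\D(q-1)=r-1$) to $(q-1)/v\mid k$. The absence of such $k$ in $\{1,\ldots,\lambda-1\}$ amounts to $(q-1)/v\ge\lambda$, i.e.\ $v\le(q-1)/\lambda$.

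Combining the two bounds, the chain $(q-1)/\lambda\mid d\mid v\le(q-1)/\lambda$ collapses to $d=v=(q-1)/\lambda=(r-1)/n$ and $\ell=1$. The coprimality $\gcd(v,\D)=1$ then follows immediately from $v=d\mid a_1$ together with $a_1\in\zZ_\D^*$, which yields $\gcd(v,\D)\mid\gcd(a_1,\D)=1$. The only delicate step is the projectivity translation in the middle of the argument, where one must verify that the $\ef_q$-proportionality of two columns splits cleanly into the two divisibility conditions $\D\mid m$ and $(q-1)/v\mid k$; once this is done, the rest is a routine chain of divisibilities, and no Gauss-sum or group-ring machinery is needed for this particular lemma.
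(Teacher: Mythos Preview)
Your argument is correct and is essentially the same as the paper's: both translate projectivity into the non-existence of $m\in\{1,\ldots,n-1\}$ with $\beta^{ma_1}=\beta^{ma_2}\in\ef_q^*$, reduce this via $\gcd(a_1,\D)=1$ to the bound $v\le (q-1)/\lambda$, and combine it with the divisibility $(q-1)/\lambda\mid d\mid v$ coming from $\ga^{a_in}=1$ to force equality throughout. The only difference is the order of presentation (you do the lower bound first, the paper the upper bound), and your final line $\gcd(v,\D)\mid\gcd(a_1,\D)=1$ makes the coprimality step slightly more explicit than the paper's one-sentence remark.
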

\begin{proof}Since $C$ is projective, the existence of $0\leq i\leq j\leq n-1$, $\mu\in\ef_q^*$ such that $\tr(a\beta^{ia_1}+b\beta^{ia_2})=\mu \tr(a\beta^{ja_1}+b\beta^{ja_2})$ for any $a,b\in\ef_r$ would imply that $i=j$. To put it another way, if there exists $0\leq i\leq j\leq n-1$ and $\mu\in\ef_q^*$ such that $\beta^{ia_1}=\mu\beta^{ja_1}$, $\beta^{ia_2}=\mu\beta^{ja_2}$, then $i=j$. Since $\beta^{(i-j)a_1}=\mu$ is  in $\ef_q^*$, we have that $\D$ divides $a_1(i-j)$. Since $\gcd(a_1,\D)=1$, we have $i-j=u\D$ for some integer $u$. From $\beta^{(i-j)a_1}=\beta^{(i-j)a_2}$, we get $r-1|u\D (a_1-a_2)$, i.e., $q-1|u(a_1-a_2)$. Since $v=\gcd(a_1-a_2,q-1)$, $u$ is a multiple of $\frac{q-1}{v}$. If $\frac{q-1}{v}\D=\frac{r-1}{v}<n$, then the pair $(i,j)=(\frac{r-1}{v},0)$ would contradict the fact $C$ is projective. Hence $\frac{r-1}{v}\geq n$. On the other hand, $n$ is divisible by $\frac{r-1}{\gcd(a_1,a_2,q-1)}$, so
 \[
 \frac{r-1}{\gcd(a_1,a_2,q-1)}\leq n\leq \frac{r-1}{v}.
 \]
Since $v\geq \gcd(a_1,a_2,q-1)$, we immediately get $v=\gcd(a_1,a_2,q-1)$, $n=\frac{r-1}{v}$, $\ell=1$. Since $a_1$ is relatively prime with $\D$, we have $\gcd(v,\D)=1$.
\end{proof}

\vspace{8pt}

It follows from the above lemma that $a_1,a_2$ satisfy condition (3).
We break the remaining part of the proof into  a series of lemmas.\\

\begin{lem}\label{lem_2} The product  $G(\varphi^{\frac{q-1}{v}s})G(\varphi^{-\frac{q-1}{v}ws})$ is divisible by $r$, for any $s\not\equiv 0\pmod{v\D}$.
\end{lem}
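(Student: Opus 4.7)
The plan is to read off the claimed divisibility from a Gauss-sum expansion of the character sum computing $wt(c_{a,b})$, after first using the two-weight hypothesis to pin down this sum modulo $r$, and then isolating individual Gauss-sum factors by character orthogonality.

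The weight of $c_{a,b}$ is governed by the standard identity
\[
q\bigl(n - wt(c_{a,b})\bigr) - n \;=\; S(a,b) \;:=\; \sum_{i=0}^{n-1}\sum_{t\in\ef_q^*}\psi\bigl(\tr(ta\beta^{ia_1}+tb\beta^{ia_2})\bigr).
\]
Since the two nonzero weights are $\lambda q^{k-1}$ and $(\lambda-1)q^{k-1}$ and $n=\lambda\D$, a direct substitution shows $S(a,b)\in\{-\lambda,\,r-\lambda\}$ for every $(a,b)\ne(0,0)$; both possibilities give $S(a,b)\equiv -\lambda\pmod r$, which using $v\lambda=q-1$ rearranges to
\[
v\D\cdot S(a,b)\equiv -(r-1)\equiv 1 \pmod r.
\]

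Next, for $a,b\in\ef_r^*$, one expands $S(a,b)$ via the Gauss-sum inversion formula recalled in the excerpt. Setting $T:=\{(t\beta^{ia_1},\, t\beta^{ia_2}) : t\in\ef_q^*,\, 0\le i\le n-1\}\subset \ef_r^*\times\ef_r^*$, the injectivity argument in the proof of Lemma \ref{lem_1} identifies $T$ as a subgroup of order $(q-1)n$. Character orthogonality on $T$ collapses the double Gauss-sum expansion to a sum indexed by the annihilator
\[
T^\perp = \{(u_1,u_2)\in(\zZ/(r-1))^2 : q-1\mid u_1+u_2,\ r-1\mid u_1a_1+u_2a_2\},
\]
which has order $v\D$. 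The combinatorial crux is the clean parametrization
\[
T^\perp = \bigl\{\bigl(\tfrac{q-1}{v}s,\,-\tfrac{q-1}{v}ws\bigr) : s\in\zZ/(v\D)\bigr\},
\]
obtained by solving the two congruences for $u_2$ given $u_1=\tfrac{q-1}{v}s$, using $v\mid a_1-a_2$, $a_2\in\zZ_\D^*$, and the defining identity $w=1+\tilde{a}_2(a_1-a_2)$; the $v\D$ values so produced are distinct because $s\mapsto \tfrac{q-1}{v}s$ is injective on $\zZ/(v\D)$.

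Substituting this parametrization and separating the $s=0$ term (which contributes $G(\chi_0)^2=1$) yields
\[
v\D\cdot S(a,b) = 1 + \sum_{s\in\zZ/(v\D),\,s\ne 0} A_s\,\chi_s(a^{-1}b^w),
\]
with $A_s:=G(\varphi^{(q-1)s/v})\,G(\varphi^{-(q-1)ws/v})$ and $\chi_s:=\varphi^{(q-1)s/v}$. Combined with $v\D\cdot S(a,b)\equiv 1\pmod r$ from the first step, this forces $\sum_{s\ne 0}A_s\,\chi_s(c)\equiv 0\pmod r$ in $\zZ[\xi_{r-1},\xi_p]$ for every $c\in\ef_r^*$, since $c=a^{-1}b^w$ sweeps $\ef_r^*$ (take $b=1$) as $(a,b)$ ranges over $(\ef_r^*)^2$. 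Finally, fixing $s_0\not\equiv 0\pmod{v\D}$, the characters $\chi_s$ are pairwise distinct across $s\in\zZ/(v\D)$, so multiplying by $\chi_{s_0}^{-1}(c)$ and summing over $c\in\ef_r^*$ gives $(r-1)A_{s_0}\equiv 0\pmod r$; since $\gcd(r-1,r)=1$, we conclude $A_{s_0}\equiv 0\pmod r$, which is exactly the lemma. I expect the main obstacle to be establishing the clean parametrization of $T^\perp$ above; once it is in place, the rest is bookkeeping with Gauss sums and the extraction step is routine character orthogonality.
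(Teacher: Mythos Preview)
Your proposal is correct and follows essentially the same route as the paper: both compute the character sum $S(a,b)=\psi_a\otimes\psi_b(R)$ (your $T$ is the paper's $R$), observe from the two-weight hypothesis that it lies in $\{-\lambda,\,r-\lambda\}$, expand it via Gauss sums to get the sum $\sum_s A_s\,\varphi^{-(q-1)s/v}(ab^{-w})$ indexed by the same parametrization $(i,j)=\bigl(\tfrac{q-1}{v}s,\,-\tfrac{q-1}{v}ws\bigr)$ (the paper cites \cite[Lemma~7]{vega} for this step, while you sketch it via the annihilator $T^\perp$), and then isolate $A_{s_0}$ by summing against $\varphi^{(q-1)s_0/v}(c)$ over $c\in\ef_r^*$ and using $\gcd(r,r-1)=1$. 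The only cosmetic difference is that the paper names the inner sum $B_x$ and records its exact values, whereas you work modulo $r$ throughout; the arguments are otherwise identical.
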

\begin{proof}
We define the following subset $R\subset(\ef_r,+)\times(\ef_r,+)$ as in \cite{wolfm}:
\[
R:=\{(y\beta^{ia_1},y\beta^{ia_2})|0\leq i\leq n-1, y \in \ef_q^*\}.
\]
Let $a,b$ be two nonzero elements of $\ef_r^*$, and we associate the weight of $c_{a,b}$ to the value $\psi_a\otimes\psi_b(R)$ as follows.
 \begin{align*}
 \psi_a\otimes\psi_b(R)&=\sum_{i=0}^{n-1}\sum_{y\in \ef_q^*}\psi(y\tr(a\beta^{ia_1}+b\beta^{ia_2}))\\
 &=(-1)\cdot wt(c_{a,b})+(q-1)(n-wt(c_{a,b}))\\
 &=n(q-1)-qwt(c_{a,b}).
 \end{align*}
 This takes only two values, namely $-\lambda$ and $r-\lambda$, since $wt(c_{a,b})\in\{\lambda q^{k-1},(\lambda-1)q^{k-1}\}$. Now we compute it again in the McEliece way:
  \begin{align*}
 \psi_a&\otimes\psi_b(R)=\sum_{i=0}^{n-1}\sum_{y\in \ef_q^*}\psi(\tr(ay\beta^{ia_1})\psi(\tr(by\beta^{ia_2}))\\
 &=\frac{n}{r-1}\sum_{x\in\ef_r^*}\sum_{y\in \ef_q^*}\psi(\tr(ayx^{a_1})\psi(\tr(byx^{a_2}))\\
 &=\frac{1}{v(r-1)^2}\sum_{y\in F_q^*}\sum_{x\in F_r^*}\sum_{\chi,\phi\in\widehat{F_r^*}}G(\chi)G(\phi)\chi^{-1}(ayx^{a_1})\phi^{-1}(byx^{a_2})\\
 &=\frac{1}{v(r-1)^2}\sum_{\chi,\phi\in\widehat{F_r^*}}G(\chi)G(\phi)\chi^{-1}(a)\phi^{-1}(b)\\
 &\quad\quad\quad\quad\quad\quad\quad\quad\quad\cdot\sum_{x\in F_r^*}(\chi^{a_1}\phi^{a_2})(x)\sum_{y\in F_q^*}(\chi\phi)^{-1}(y)\\
 &=\frac{1}{v\Delta}\sum_{\chi,\phi}G(\chi)G(\phi)\chi^{-1}(a)\phi^{-1}(b),
 \end{align*}
 where the last summation is over those $\chi,\phi$ such that \[\chi\phi|_{\ef_q^*}=\chi_0,\quad \chi^{a_1}\phi^{a_2}=\chi_0.\] Here we use $\chi_0$ for the trivial character of $\ef_q^*$ and $\ef_r^*$, which we don't make a distinction. Write $\chi=\varphi^i$, $\phi=\varphi^j$ (recall that $\varphi$ is a fixed multiplicative character of order $r-1$). Then the above condition is translated to \[i+j\equiv 0\pmod{q-1},\quad a_1i+a_2j\equiv 0\pmod{r-1}.\]
 By exactly the same argument as in the proof of \cite[Lemma 7]{vega}, we see that this is equivalent to
 \[
 i=\frac{q-1}{v}t,\quad j=-iw,\quad0\leq t\leq v\D-1.
 \]
  We thus have
   \begin{align*}
 \psi_a&\otimes\psi_b(R)
 =\frac{1}{v\Delta}\sum_{t=0}^{v\D-1}G(\varphi^{\frac{q-1}{v}t})G(\varphi^{-\frac{q-1}{v}wt})
 \varphi^{-\frac{q-1}{v}t}(ab^{-w})\\
 &=\frac{1}{v\Delta}\sum_{t=1}^{v\D-1}G(\varphi^{\frac{q-1}{v}t})G(\varphi^{-\frac{q-1}{v}wt})
 \varphi^{-\frac{q-1}{v}t}(ab^{-w})+\frac{1}{v\D}\\
 &=\frac{1}{v\Delta}B_{ab^{-w}}+\frac{1}{v\D}\in\{-\lambda,\,r-\lambda\},
 \end{align*}
where we define
\[
B_x:=\sum_{t=1}^{v\D-1}G(\varphi^{\frac{q-1}{v}t})G(\varphi^{-\frac{q-1}{v}wt})
 \varphi^{-\frac{q-1}{v}t}(x),\quad x\in\ef_r^*.
 \]
We thus have $B_x\in\{-r,v\D r\}$ for any $x\in \ef_r^*$, by letting $b=1$, $a=x$ above. An easy consequence is that $r$ divides each of $B_x$, $x\in\ef_r^*$.

For each integer $1\leq s\leq v\D-1$, we compute that
 \begin{align*}
 &\frac{1}{r-1}\sum_{x\in\ef_r^*}
 B_x\varphi^{\frac{q-1}{v}s}(x)\\
 &=\frac{1}{r-1}\sum_{x\in\ef_r^*}\sum_{t=1}^{v\D-1}G(\varphi^{\frac{q-1}{v}t})G(\varphi^{-\frac{q-1}{v}wt})
 \varphi^{\frac{q-1}{v}(s-t)}(x)\\
 &=\frac{1}{r-1}\sum_{t=1}^{v\D-1}G(\varphi^{\frac{q-1}{v}t})G(\varphi^{-\frac{q-1}{v}wt})
 \sum_{x\in\ef_r^*}\varphi^{\frac{q-1}{v}(s-t)}(x)\\
 &=G(\varphi^{\frac{q-1}{v}s})G(\varphi^{-\frac{q-1}{v}ws}).
 \end{align*}
 Since $\gcd(r-1,r)=1$ and $r$ divides each of $B_x$, $x\in\ef_r^*$, this product is divisible by $r$. By the properties of Gauss sums, it also has modulus $r$.\\
\end{proof}

\noindent {\it Remark:} The subset $R$ here is a partial difference set in the group $(\ef_r,+)\times(\ef_r,+)$, c.f. \cite{ck}.\\

\begin{lem}\label{lem_3}  We have $G(\varphi^{\frac{q-1}{v}ws})=G(\varphi^{\frac{q-1}{v}s})\eta_s$ for some root of unity $\eta_s$, for any $s\not\equiv 0\pmod{v\D}$.
\end{lem}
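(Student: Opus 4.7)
The plan is to invoke Kronecker's theorem: an algebraic integer whose Galois conjugates all lie on the unit circle must be a root of unity. Writing $\chi_1:=\varphi^{\frac{q-1}{v}s}$, the target identity becomes $G(\chi_1^w)=G(\chi_1)\eta_s$, and the natural element to feed into Kronecker is $u:=G(\chi_1)\,G(\chi_1^{-w})/r$. By Lemma~\ref{lem_2} this $u$ lies in $\zZ[\xi_{r-1},\xi_p]$, and once I know that both $\chi_1$ and $\chi_1^w$ are nontrivial characters, the identity $|G(\chi)|^2=r$ for nontrivial $\chi$ forces $|u|=1$.

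Nontriviality of $\chi_1$ is immediate from $s\not\equiv 0\pmod{v\D}$. For $\chi_1^w$ I would argue by contradiction: if $\chi_1^w=\chi_0$ then $G(\chi_1^{-w})=-1$, so the product in Lemma~\ref{lem_2} collapses to $-G(\chi_1)$. Stickelberger's theorem then gives $v_{\mathfrak{P}}(G(\chi_1))<m(p-1)=v_{\mathfrak{P}}(r)$ (any nontrivial Gauss sum has strictly smaller $\mathfrak{P}$-adic valuation than $r=p^m$), contradicting $r\mid -G(\chi_1)$.

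Next I would compute every Galois conjugate of $u$. For $\sigma_{a,b}\in\mathrm{Gal}(\mathbb{Q}(\xi_{r-1},\xi_p)/\mathbb{Q})$ with $\sigma_{a,b}(\xi_{r-1})=\xi_{r-1}^a$ and $\sigma_{a,b}(\xi_p)=\xi_p^b$, a short change of variable in the definition of $G(\chi)$ yields $\sigma_{a,b}(G(\chi))=\chi^{-a}(b)\,G(\chi^a)$, hence $\sigma_{a,b}(u)=\chi_1^{a(w-1)}(b)\,G(\chi_1^a)\,G(\chi_1^{-aw})/r$. Because $\gcd(a,r-1)=1$, raising to the $a$-th power preserves the order of a character, so both $\chi_1^a$ and $(\chi_1^w)^{-a}$ remain nontrivial, each Gauss sum has modulus $\sqrt{r}$, and therefore $|\sigma_{a,b}(u)|=1$. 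Kronecker's theorem now forces $u$ to be a root of unity.

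To extract the stated identity I would unpack $u$ using $G(\chi_1^{-w})=\chi_1^w(-1)\overline{G(\chi_1^w)}=\chi_1^w(-1)\cdot r/G(\chi_1^w)$, which turns the definition of $u$ into $u=\chi_1^w(-1)\,G(\chi_1)/G(\chi_1^w)$, so that $G(\chi_1^w)=G(\chi_1)\cdot\chi_1^w(-1)/u$, of the desired form with $\eta_s:=\chi_1^w(-1)/u$ a root of unity. The step I expect to require the most care is the Stickelberger computation needed to rule out $\chi_1^w=\chi_0$; once that is established, setting up $u$ and tracking the Galois action through the formula for $\sigma_{a,b}(G(\chi))$ makes the remainder essentially automatic.
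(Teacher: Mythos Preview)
Your proposal is correct and follows essentially the same route as the paper: apply Kronecker's theorem to the algebraic integer $G(\chi_1)G(\chi_1^{-w})/r$ after checking, via the Galois action formula $\sigma(G(\chi))=\chi^{-a}(b)G(\chi^a)$, that every conjugate has modulus~$1$, and then unwind with $G(\chi^{-1})=\chi(-1)\overline{G(\chi)}$ and $|G(\chi)|^2=r$. Your Stickelberger argument for the nontriviality of $\chi_1^{w}$ is sound but could be shortened: from $w\equiv 1\pmod v$ and $w\equiv a_1\tilde a_2\pmod\Delta$ one sees directly that $\gcd(w,v\Delta)=1$, so $\chi_1^{w}$ is nontrivial whenever $\chi_1$ is.
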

\begin{proof}
Define the set $X$ as
\[
\left\{\pm\xi_{r-1}^iG(\varphi^{\frac{q-1}{v}s})G(\varphi^{-\frac{q-1}{v}ws})|i,s\in \zZ,s\not\equiv 0\pmod{v\D}\right\}.
\]
Each element in the set $X$ is divisible by $r$ as well as has modulus $r$ by  Lemma \ref{lem_2}.  We show that this set is fixed by $\text{Gal}(\mathbb{Q}(\xi_p,\xi_{r-1})/\mathbb{Q})$, so that each element of $X$ has all its algebraic conjugates in $X$. Let $\sigma_{u_1,u_2}$ be the automorphism which maps $\xi_p$ to $\xi_p^{u_1}$, $\xi_{r-1}$ to $\xi_{r-1}^{u_2}$, where $\gcd(p,u_1)=1$, $\gcd(r-1,u_2)=1$. For each multiplicative character $\chi$, we have
\[
\sigma_{u_1,u_2}(G(\chi))=\chi^{-u_2}(u_1)G(\chi^{u_1}).
\]
 The verification is now routine.

Now we have shown that each element of $X$ is divisible by $r$ and all its algebraic conjugates have modulus $r$. By a classical result of Kronecker (see \cite{irc}), it is equal to $r$ times a root of unity. Since
\[
G(\chi)\overline{G(\chi)}=r,\; G(\chi^{-1})=\chi(-1)\overline{G(\chi)}
\]
for any nonprincipal character $\chi$, we get the claim.\\
\end{proof}

In the case $s$ is a multiple of $v$, the above lemma shows that $G(\chi^w)=G(\chi)\eta_\chi$ for some root of unity $\eta_\chi$, where $\chi$ is any nonprincipal character of $\ef_r^*$ which is principal on $\ef_q^*$.\\

\begin{lem}\label{lem_4}   $w$ is a power of $p$ modulo $\D$, i.e., $a_1\equiv a_2p^j\pmod{\D}$ for some integer $j$.
\end{lem}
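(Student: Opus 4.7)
The plan is to show that $w$ is a multiplier of the classical Singer difference set $D \subset G = \ef_r^*/\ef_q^*$; once this is done, the multiplier characterization cited in the preliminaries (\cite[Prop.~3.1.1]{Pott}, stating that the only multipliers of classical Singer difference sets are powers of $p$) immediately gives $w \equiv p^j \pmod{\D}$. Since $a_2 w \equiv a_1 \pmod{\D}$ follows from $w = 1 + \tilde{a}_2(a_1 - a_2)$ together with $a_2 \tilde{a}_2 \equiv 1 \pmod{\D}$, this yields the desired conclusion $a_1 \equiv a_2 p^j \pmod{\D}$.

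First I would verify that $\gcd(w, \D) = 1$: from $a_2 w \equiv a_1 \pmod{\D}$ and the fact $a_1, a_2 \in \zZ_\D^*$ established in Lemma~\ref{lem_1}, so the map $g \mapsto g^w$ is a bijection of $G$ and $D^{(w)}$ is a well-defined $0$--$1$ element of $\zZ[G]$. Translating the Gauss-sum identity in the remark after Lemma~\ref{lem_3} via $\chi(D) = -G(\chi)/q$, we obtain
\[
\chi(D^{(w)}) \;=\; \chi^w(D) \;=\; \eta_\chi\, \chi(D)
\]
for every nontrivial character $\chi$ of $G$, with $\eta_\chi$ a root of unity. In particular $|\chi(D^{(w)})| = |\chi(D)|$ for all nontrivial $\chi$, so $D^{(w)}$ is itself a $(\D, q^{k-1}, q^{k-2}(q-1))$-difference set in $G$.

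To establish the multiplier property, I would compute $D^{(w)} D^{(-1)} \in \zZ[G]$: by character theory, its character values are $\eta_\chi q^{k-2}$ for nontrivial $\chi$ and $q^{2k-2}$ for trivial $\chi$. The desired multiplier identity $D^{(w)} = Dg$ is equivalent to $D^{(w)} D^{(-1)} = q^{k-2} g + q^{k-2}(q-1)G$ for some $g \in G$, i.e.\ $\eta_\chi = \chi(g)$ for all $\chi$. Stickelberger's theorem enters crucially here: since $G(\chi^w) = \eta_\chi G(\chi)$ with $\eta_\chi$ a root of unity, the $\mathfrak{P}$-adic valuations of the two Gauss sums coincide, which via the digit-sum function $s$ translates into the constraint $s((q-1)u) = s((q-1)(wu \bmod \D))$ for every $u \in \{1, \ldots, \D - 1\}$.

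The main obstacle is bridging this digit-sum identity to the multiplier identity. Multiplication by $p$ on $\zZ_{r-1} = \zZ_{p^m - 1}$ (where $r = p^m$) cyclically permutes base-$p$ digits and therefore visibly preserves the function $s$, so powers of $p$ satisfy the constraint automatically; the converse requires more care. My approach is to exploit the non-negativity and integrality of the coefficients of $D^{(w)} D^{(-1)}$, which by the inversion formula read
\[
[D^{(w)}D^{(-1)}]_g \;=\; \frac{q^{k-2}}{\D}\Bigl[r - 1 + \sum_{\chi\in \widehat{G}}\eta_\chi \chi^{-1}(g)\Bigr]
\]
(with the convention $\eta_{\chi_0} = 1$). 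Combining this with the Stickelberger constraint and the rigidity forced by $D^{(w)}$ being a difference set with the same parameters as $D$, these coefficients must collapse to only two values, namely $q^{k-1}$ at a unique element $g \in G$ and $q^{k-2}(q-1)$ elsewhere, yielding $D^{(w)} = Dg$ and hence the multiplier property. This last rigidity step is where I expect the main technical difficulty to lie.
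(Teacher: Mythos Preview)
Your overall strategy coincides with the paper's: show that $w$ is a multiplier of the classical Singer difference set $D\subset G=\ef_r^*/\ef_q^*$ and then invoke \cite[Proposition~3.1.1]{Pott}. The setup (checking $\gcd(w,\D)=1$, translating Lemma~\ref{lem_3} into $\chi(D^{(w)})=\eta_\chi\chi(D)$, and writing the coefficients of $D^{(w)}D^{(-1)}$ via the inversion formula) is exactly right.

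The genuine gap is your ``rigidity'' step. Stickelberger's theorem is a red herring here; the paper does not use digit sums at all in this lemma (they enter only afterwards, for the $\pmod{v\D}$ refinement). What you are missing is a standard multiplier-theorem trick, which closes the argument in a few lines. Set
\[
F:=D^{(w)}D^{(-1)}-q^{k-2}(q-1)G\in\zZ[G].
\]
Then $\chi_0(F)=|D|^2-q^{k-2}(q-1)\D=q^{k-2}$, while for nontrivial $\chi$ one has $\chi(F)=\eta_\chi\,|\chi(D)|^2=\eta_\chi q^{k-2}$. Thus every character value of $F$ is $q^{k-2}\eta_\chi$ with $|\eta_\chi|=1$ and $\eta_{\chi_0}=1$. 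By the inversion formula the coefficient of any $g$ in $F$ equals
\[
\frac{q^{k-2}}{\D}\sum_{\chi\in\widehat{G}}\eta_\chi\chi^{-1}(g)\in\zZ.
\]
Since $\gcd(q^{k-2},\D)=1$, the quantity $\frac{1}{\D}\sum_\chi\eta_\chi\chi^{-1}(g)$ is itself a rational integer; it has modulus $\le 1$, so it lies in $\{-1,0,1\}$. As $\chi_0(F)\ne 0$, $F$ has some nonzero coefficient; for that $g$ the value is $\pm 1$, and equality in the triangle inequality forces $\eta_\chi=\epsilon\chi(g)$ for all $\chi$ with a fixed sign $\epsilon$. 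Evaluating at $\chi_0$ gives $\epsilon=1$. Hence $F=q^{k-2}g$, and comparing with $DD^{(-1)}=q^{k-2}+q^{k-2}(q-1)G$ (using that $D^{(-1)}$ is invertible in $\cC[G]$) yields $D^{(w)}=Dg$. So $w$ is a multiplier, and the lemma follows. No appeal to Stickelberger, non-negativity, or ``rigidity of difference-set parameters'' is needed.
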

\begin{proof}
 Let $G$ be the quotient group $\ef_r^*/\ef_q^*$, and $D$ be the classical Singer difference set in $G$. Recall that for a nontrivial character $\chi$ of $G$, we have (\cite{yama})
 \[\chi(D)=-\frac{G(\chi)}{q}.\]
 By the previous lemma, we see that $\chi(D^{(w)})=\sum_{d\in D}\chi(d^w)=\chi^w(D)$ and $\chi(D)$ differ by a root of unity. We shall show that $w$ is a multiplier of $D$, and hence by \cite[Proposition 3.1.1]{Pott}, $w$ is a power of $p$ modulo $\D$. The following argument is a standard trick in the proofs of various multiplier theorems.

 First define $F:=D^{(w)}D^{(-1)}-q^{k-2}(q-1)G\in \zZ[G]$. For each nonprincipal character $\chi$ of $G$, we have \[\chi(F)=\frac{G(\chi^w)G(\chi^{-1})}{q^2}=q^{k-2}\eta_\chi,\]
 for some root of unity $\eta_\chi$. If $\chi_0$ is the principal character, then $\chi_0(F)=|D|^2-q^{k-2}|G|=q^{k-2}\eta_{\chi_0}$, with $\eta_{\chi_0}=1$.

 Let $g$ be an element of $G$ which has nonzero coefficient in $F$. By the inversion formula, this coefficient is
 \[
 \frac{1}{|G|}\sum_{\chi\in\widehat{G}}\chi(F)\chi^{-1}(g)=q^{k-2}\frac{\sum_{\chi\in\widehat{G}}\eta_\chi\chi^{-1}(g)}{|G|}\in\zZ.
 \]

 Now since $q^{k-2}$ and $|G|$ are relatively prime, $\frac{\sum_{\chi}\eta_\chi\chi^{-1}(g)}{|G|}$ is a rational integer. This number has modulus not exceeding $1$ and not equal to zero, so is either $1$ or $-1$. We deduce that $\eta_\chi=\epsilon\chi(g)$ for each $\chi\in\widehat{G}$, where $\epsilon\in\{\pm 1\}$. Since $\eta_{\chi_0}=1$, we must have $\epsilon=1$. Now, for any $h\in G\setminus\{g\}$, its coefficient in $F$ is equal to
 \[
 \frac{1}{|G|}\sum_{\chi\in\widehat{G}}\chi(F)\chi^{-1}(h)=\frac{q^{k-2}}{|G|}\sum_{\chi\in\widehat{G}}\chi(gh^{-1})=0
 \]
 by the orthogonal relation.  We thus have $F=q^{k-2} g$. Comparing this with
 \[
 DD^{(-1)}=q^{k-2}+q^{k-2}(q-1)G,
 \]
 and using the fact that $D^{(-1)}$ is invertible in the ring $\cC[G]$,  we conclude that $D^{(w)}= Dg$. Hence $w$ is a multiplier of the difference set $D$. The claim then follows.\\
\end{proof}

Using Stickelberger's theorem, we  deduce from Lemma \ref{lem_3} that
\begin{equation}\label{dig_sum}
 s\left(\frac{q-1}{v}wt\right)=s\left(\frac{q-1}{v}t\right),\forall\, t\not\equiv 0\pmod{v\D}.
\end{equation}
The set of integers $x$'s relatively prime with $v\Delta$ that satisfy the above equation with $w$ replaced by $x$ is clearly closed under multiplication, so forms a multiplicative subgroup of $\zZ_{v\Delta}^*$ after modulo $v\Delta$. Moreover, $p$ is such an integer.

By Lemma \ref{lem_4}, we have $w\equiv p^j\pmod{\Delta}$ for some integer $0\leq j\leq m-1$ (here $r=p^m$). Since $v=\gcd(a_1-a_2,q-1)$, we naturally have $w\equiv 1\pmod{v}$, recalling that $w=1+\tilde{a}_2(a_1-a_2)$. In order to show that $w\equiv p^j\pmod{v\Delta}$, we only need to prove that $p^j\equiv 1\pmod{v}$, using the Chinese Remainder Theorem and the fact that $\gcd(v,\Delta)=1$.   \\

\begin{lem} $w$ is a power of $p$ modulo $v\D$.
\end{lem}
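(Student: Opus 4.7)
Set $u := w p^{-j}$ with $j$ the integer from Lemma~\ref{lem_4}; then $u$ lies in the digit-sum-preserving subgroup $H \le \zZ_{v\D}^*$ and $u \equiv 1 \pmod{\D}$. The paper has already reduced the lemma to showing $u \equiv 1 \pmod v$, since then $u_0 := u\bmod v = 1$ gives $p^{-j}\equiv 1 \pmod v$, i.e.\ $p^j \equiv 1 \pmod v$, and the Chinese Remainder Theorem delivers $w \equiv p^j \pmod{v\D}$. The plan is to repeat the multiplier argument of Lemma~\ref{lem_4} in a second group adapted to the $v$-part.

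Consider $\bar G_v := \ef_r^*/(\ef_r^*)^v$, cyclic of order $v$; because $\gcd(v,\D) = 1$, the inclusion $\ef_q^* \hookrightarrow \ef_r^*$ induces an isomorphism $\ef_q^*/(\ef_q^*)^v \cong \bar G_v$, and every nonprincipal character of $\bar G_v$ lifts to a character of $\ef_r^*$ of order dividing $v$ that is nontrivial on $\ef_q^*$. Project $E := \{x \in \ef_r^* : \tr(x) = 1\}$ to obtain $\tilde E \in \zZ[\bar G_v]$. Using the formula $\chi(E) = \frac{G(\chi)}{q}\chi(-1)G_{\ef_q}(\chi|_{\ef_q^*}^{-1})$ together with Davenport--Hasse lifting of Gauss sums, one checks $|\chi(\tilde E)|^2 = q^{k-1}$ for every nonprincipal $\chi$. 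Apply Lemma~\ref{lem_3} to characters of order dividing $v$ (the case $s = \D s'$ in its statement); combined with the induced root-of-unity ratio of $\ef_q$-Gauss sums (via the same Davenport--Hasse identity), this shows $\chi^u(\tilde E)/\chi(\tilde E) = \eta_\chi$ is a root of unity for every nonprincipal $\chi$. Set
\[
\tilde F := \tilde E^{(u)}\tilde E^{(-1)} - \frac{q^{k-1}(q^{k-1}-1)}{v}\bar G_v,
\]
so that $\chi(\tilde F) = q^{k-1}\eta_\chi$ for every character, with $\eta_{\chi_0} = 1$. The coefficients of $\tilde F$ are integers; since $\gcd(q^{k-1},v) = 1$ (because $v \mid q-1$) and each character sum $\sum_{\chi}\eta_\chi\chi^{-1}(g)$ is a rational algebraic integer of absolute value at most $v$, the same template as Lemma~\ref{lem_4} forces $\tilde F = q^{k-1}h^*$ for a single element $h^* \in \bar G_v$. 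Hence $\tilde E^{(u)} = h^*\tilde E$ in $\cC[\bar G_v]$, so $u$ is a multiplier of $\tilde E$.

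The hardest step will be passing from the multiplier relation $\tilde E^{(u_0)} = h^*\tilde E$ (here $u$ acts on $\bar G_v$ by multiplication by $u_0$, because $u \equiv 1 \pmod\D$) to the conclusion $u_0 = 1$. The multiplier relation alone only forces $u_0$ to be a power of $p$ modulo $v$, which is already true since $u_0 \equiv p^{-j} \pmod v$; the additional input must come from equation (\ref{dig_sum}) applied at $t = 1$ rather than at $t = c^*\D$. Indeed, taking $t=1$ yields the non-cyclic identity $s_{\ef_q}((1+c)(q-1)/v) + (k-1)s_{\ef_q}(c(q-1)/v) = s_{\ef_q}((q-1)/v)$, where $c$ is the integer in $\{0,\ldots,v-1\}$ with $1 + c\D \equiv u_0 \pmod v$; this identity is not automatically preserved by $p$-power cycling. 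I plan to combine this constraint with the multiplier theorem for $\tilde E$ in $\bar G_v$ to eliminate all $u_0 \neq 1$ among the $p$-powers modulo $v$, concluding $u_0 = 1$ and hence $p^j \equiv 1 \pmod v$.
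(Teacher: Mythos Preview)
Your proposal has a genuine gap: it never actually proves the lemma. The entire $\bar G_v$ multiplier argument is, as you yourself concede, vacuous: since $w\equiv 1\pmod v$, your $u=wp^{-j}$ satisfies $u\equiv p^{-j}\pmod v$, so ``$u_0$ is a power of $p$ modulo $v$'' is tautological and carries no information. Moreover, the ``multiplier theorem for $\tilde E$'' you invoke is not a known result---Pott's Proposition~3.1.1 is specific to Singer difference sets, and $\tilde E$ in $\bar G_v$ is not one. So even granting $\tilde E^{(u_0)}=h^*\tilde E$, nothing restricts $u_0$ beyond what was already obvious. The only potentially non-vacuous ingredient is the $t=1$ instance of~\eqref{dig_sum}, which does give the correct identity
\[
s\!\left(\tfrac{(1+c)(q-1)}{v}\right)+(k-1)\,s\!\left(\tfrac{c(q-1)}{v}\right)=s\!\left(\tfrac{q-1}{v}\right),
\]
but you do not prove this forces $c=0$; the sentence ``I plan to combine this constraint\ldots'' is a hope, not an argument, and no mechanism is offered for why a single value of $t$ should suffice.

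The paper proceeds along entirely different lines. It never works in $\bar G_v$ at all. Instead it replaces $w$ by $w_0:=w^{u_0}q^{v_0}$ satisfying $w_0\equiv 1\pmod v$ and $w_0\equiv p^d\pmod\Delta$ with $d=\gcd(j,m_0)$, then exploits~\eqref{dig_sum} at the carefully chosen values $t=\Delta+vp^{i_0d+k_0}\delta$. Comparing $s\bigl(\tfrac{q-1}{v}t\bigr)$ with $s\bigl(\tfrac{q-1}{v}w_0^{\,x}t\bigr)$ for a suitable shift $x$ isolates a single base-$p$ digit of $\tfrac{q-1}{v}$ against another, and a min/max extremal choice of the positions $i_0,j_0$ forces all base-$p^d$ digits of $\tfrac{q-1}{v}$ to coincide; hence $v\mid p^d-1\mid p^j-1$. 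This is a concrete combinatorial argument on $p$-adic expansions, not a multiplier-style argument, and it uses infinitely many instances of~\eqref{dig_sum} rather than just $t=1$. To repair your approach you would either have to prove directly that the $t=1$ identity above has no solution with $1\le c\le v-1$ and $k\ge 2$ (which is not obvious and which you have not done), or abandon the $\bar G_v$ framework and carry out a digit-sum analysis of the paper's type.
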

\begin{proof}Let $d=\gcd(j,m_0)$, where $q=p^{m_0}$, $r=p^m=p^{m_0k}$ and $j$ is the same as in the previous lemma. There exist positive integers $u_0,v_0$ such that $u_0j+v_0m_0\equiv d\pmod{m}$. Define $w_0:=w^{u_0}q^{v_0}\pmod{v\Delta}$. Then Eqn. \eqref{dig_sum} also holds with $w$ replaced by $w_0$, and
\[
w_0\equiv 1\pmod{v},\; w_0\equiv p^d\pmod{\Delta}.
 \]
We aim to show that $v$ divides $p^d-1$ and thus $p^j-1$. Write $m_0=r_0d$. Since $v|q-1$, we only need to deal with the case $r_0>1$. The case $d=1$ and $d>1$ can be dealt with in much the same way, but for clarity we deal with them separately.\\

First we consider the case $d=1$. Write
\[
\frac{q-1}{v}=a_0+a_1p+\cdots+a_{r_0-1}p^{r_0-1},\;0\leq a_i\leq p-1.
\]
If $a_0=a_1=\cdots=a_{r_0-1}$, then $\frac{q-1}{p-1}$ divides $\frac{q-1}{v}$, so $v|p-1$ and we are done. Assume that the $a_i$'s are not all equal. Let $i_0$ be such that $a_{i_0}$ is smallest among all $a_i$'s and $a_{i_0}<a_{i_0+1}$, where the subscript is taken modulo $r_0$; similarly, let $j_0$ be such that  $a_{j_0}$ is largest among all $a_i$'s and $a_{j_0}>a_{j_0+1}$. This guarantees that $a_{j_0+1}+1<p$, $a_{i_0+1}-1\geq 0$. Set $x:=j_0-i_0\pmod{m}$. We have
\[
\frac{q-1}{v}w_0^{x}\Delta \pmod{r-1}=\frac{q-1}{v}\Delta,
\]
which is equal to $\sum_{i=0}^{k-1}(a_0+a_1p+\cdots+a_{r_0-1}p^{r_0-1})p^{r_0i}$.

Take $t=\Delta+vp^{i_0}\delta $, with $0\leq \delta\leq p-1$ to be specified later. Then $\frac{q-1}{v}t$ is congruent to the sum of
\begin{align*}
&(a_0+a_1p+\cdots+p^{i_0}(a_{i_0}-\delta)+\cdots+a_{r_0-1}p^{r_0-1}),\\
&(a_0+a_1p+\cdots+p^{i_0}(a_{i_0}+\delta)+\cdots+a_{r_0-1}p^{r_0-1})p^{r_0},
\end{align*}
and $\sum_{i=2}^{k-1}(a_0+a_1p+\cdots+a_{r_0-1}p^{r_0-1})p^{r_0i}$ modulo $r-1$. Similarly, $\frac{q-1}{v}w_0^{x}t $ is congruent to the sum of
\begin{align*}
&(a_0+a_1p+\cdots+p^{j_0}(a_{j_0}-\delta)+\cdots+a_{r_0-1}p^{r_0-1}),\\
&(a_0+a_1p+\cdots+p^{j_0}(a_{j_0}+\delta)+\cdots+a_{r_0-1}p^{r_0-1})p^{r_0},
\end{align*}
and $\sum_{i=2}^{k-1}(a_0+a_1p+\cdots+a_{r_0-1}p^{r_0-1})p^{r_0i}$ modulo $r-1$. When $\delta=0$, the digit sums $s(\frac{q-1}{v}w_0^{j_0-i_0}t)$ and $s(\frac{q-1}{v}t)$ are both equal to $s(\frac{q-1}{v}\Delta)$, which we denote by $N$.

Take $\delta$ to be the smallest number among $a_{i_0}+1$, $a_{j_0}+1$, $p-a_{i_0}$ and $p-a_{j_0}$, which must be one of $a_{i_0}+1$ and $p-a_{j_0}$. If $\delta$ occurs only once among these four numbers, then one of   $s(\frac{q-1}{v}w_0^{x}t)$ and $s(\frac{q-1}{v}t)$ is equal to $N$, and the other differs from $N$ by  $p-1$. To be specific, by writing out their $p$-adic expansions explicitly, we see that:  if $\delta=a_{i_0}+1$, then $s(\frac{q-1}{v}w_0^{x}t)=N$, $s(\frac{q-1}{v}t)=N+p-1$; if $\delta=p-a_{j_0}$, then $s(\frac{q-1}{v}w_0^{x}t)=N-(p-1)$, $s(\frac{q-1}{v}t)=N$.  This is a contradiction to the fact that they are equal. Hence it must be that $\delta=a_{i_0}+1=p-a_{j_0}$. In this case, $a_{i_0}+\delta=a_{i_0}+p-a_{j_0}<p$, $a_{j_0}-\delta=a_{j_0}-a_{i_0}-1\geq 0$. Hence, for this choice of $\delta$, $s(\frac{q-1}{v}t)=N+p-1$, and $s(\frac{q-1}{v}w_0^{x}t)=N-(p-1)$ by writing out their $p$-adic expansions explicitly, which are not equal. This is also a contradiction.

We conclude that the $a_i$'s are all equal, so $\frac{q-1}{p-1}$ divides $\frac{q-1}{v}$ and $v$ divides $p-1$. This proves the case $d=1$.\\

Now let's assume that $d>1$. Write
\[
\frac{q-1}{v}=a_0+a_1p^d+\cdots+a_{r_0-1}p^{(r_0-1)d},\;0\leq a_i\leq p^d-1,
\]
and for each $i$ we write
\[
a_i=a_{i,0}+a_{i,1}p+\cdots+a_{i,d-1}p^{d-1},\; 0\leq a_{i,j}\leq p-1.
\]
The idea is to show that $a_{0,i}=a_{1,i}=\cdots=a_{r_0-1,i}$ for each $0\leq i\leq d-1$, using the same argument as in the case $d=1$. Because the proof is a slight modification of the $d=1$ case, we don't include all the details here.

Take any $k_0\in\{0,\cdots,d-1\}$ such that  $a_{i,k_0}$, $0\leq i\leq r_0-1$, are not all the same. Let $i_0$ (resp. $j_0$) be such that $a_{i_0,k_0}$ (resp. $a_{j_0,k_0}$) is smallest (resp. largest) among all $a_{i,k_0}$'s and $a_{i_0,k_0}<a_{i_0+1,k_0}$ (resp. $a_{j_0,k_0}>a_{j_0+1,k_0}$), where the first subscript is taken modulo $r_0$. Let $N$ be the digit sum  $s(\frac{q-1}{v}\Delta)$ as before. Let $x:=j_0-i_0\pmod{m}$.

Take $t=\Delta+vp^{i_0d+k_0}\delta $, with $\delta$ the smallest among the numbers $a_{i_0,k_0}+1$, $a_{j_0,k_0}+1$, $p-a_{i_0,k_0}$ and $p-a_{j_0,k_0}$.  Assume that $\delta$ occurs only once among these four values. If $\delta=a_{i_0,k_0}+1$, then $s\left(\frac{q-1}{v}w_0^{x}t\right)=N$, and
$s\left(\frac{q-1}{v}t\right)=N+(l_0+1)(p-1)$, where $l_0$ counts the number of consecutive $0$'s immediately following $a_{i_0,k_0}p^{di_0+k_0}$ in the $p$-adic expansion of $\frac{q-1}{v}\Delta+q\cdot p^{i_0d+k_0}\delta$. Since $a_{i_0+1,k_0}>a_{i_0,k_0}\geq 0$, we have $0\leq l_0\leq d-1$. If $\delta=p-a_{j_0,k_0}$, then $s\left(\frac{q-1}{v}w_0^{x}t\right)=N-(l_1+1)(p-1)$, and
$s\left(\frac{q-1}{v} t\right)=N$, where $l_1$ counts the number of consecutive $p-1$'s immediately following $a_{j_0,k_0}p^{dj_0+k_0}$ in the $p$-adic expansion of $\frac{q-1}{v}\Delta-p^{i_0d+k_0}\delta$. Similarly, $0\leq l_1\leq d-1$. In both cases, the two digit sums are not equal, which is a contradiction. Hence $\delta=a_{i_0,k_0}+1=p-a_{j_0,k_0}$, but in this case,
$s\left(\frac{q-1}{v}w_0^{x}t\right)<N$, and $s\left(\frac{q-1}{v} t\right)>N$ by arguing as in the $d=1$ case. This is also a contradiction.

We conclude that $a_{0,i}=a_{1,i}=\cdots=a_{r_0-1,i}$ for each $0\leq i\leq d-1$, and so $a_0=a_1\cdots=a_{r_0-1}$. It follows that $\frac{q-1}{p^d-1}$ divides $\frac{q-1}{v}$, i.e., $v$ divides $p^d-1$. Hence $p^j\equiv 1\pmod{v}$, and the desired congruence $w\equiv p^j\pmod{v\D}$ follows.\\
\end{proof}

 This proves (4), and completes the proof of Theorem \ref{main}. \qed

\section{Conclusion}
In this note, we give necessary  conditions for a two-weight projective cyclic code to be the direct sum of two one-weight irreducible cyclic subcodes of the same dimension, following the work of Wolfmann and Vega.
In particular, these necessary conditions are also sufficient in view of Theorem \ref{vgt}. Based on computer search evidence,  Vega \cite{vegac} conjectured that all two-weight cyclic codes which are the direct sum of two one-weight irreducible cyclic subcodes of the same dimension are the known ones. Our result confirms the conjecture in the projective case.

Finally, we mention that the weight distributions of cyclic codes with two zeros $\ga^{d}$, $\ga^{d+D}$ recently are studied in several papers (\cite{d1,c2,d2,d3}), where $d|q-1,\,D=\frac{r-1}{e},\,ed|q-1$ and $ed>1$. These codes will never be projective. It is of interest to determine the two-weight codes  among all such cyclic codes. This will provide further verification to Vega's conjecture.

\section*{Acknowledgments}

The author is with the Department of Mathematics, Zhejiang University, China. This research is partially supported by Fundamental Research Funds for the Central Universities of China, Zhejiang Provincial Natural Science Foundation (LQ12A01019), National Natural Science Foundation of China (11201418).


\end{document}